\newcommand{\BO}[1]{{O}\left(#1\right)}
\newcommand{\BT}[1]{{\Theta}\left(#1\right)}
\newcommand{\BOM}[1]{{\Omega}\left(#1\right)}
\newcommand{\T}{\mathcal{T}}
\newcommand{\Proc}{P}
\newcommand{\B}{\mathcal{B}}
\newcommand{\D}[1]{\Delta_{\mathrm{#1}}}
\newcommand{\sel}[1]{\emph{#1}}
\title{Space-Efficient Parallel Algorithms for Combinatorial Search Problems\thanks{An extended abstract~\cite{PietracaprinaPSV13}  of this work was presented at the 38th International Symposium on Mathematical Foundations of Computer Science, 2013.}}
\author{A. Pietracaprina\inst{1} \and G. Pucci\inst{1} \and F. Silvestri \inst{1} \and 
F. Vandin \inst{2,3}}
\institute{Dipartimento di Ingegneria dell'Informazione, University of Padova\\ 
 \email{\{capri,geppo,silvest1\}@dei.unipd.it}
\and 
Department of Mathematics and Computer Science, University of Southern Denmark\\
\texttt{vandinfa@imada.sdu.dk}
\and
Computer Science Department, Brown University}
\begin{document}

\maketitle
\begin{abstract}
We present space-efficient parallel strategies for two fundamental combinatorial search problems, namely, \emph{backtrack search} and \emph{branch-and-bound}, both involving the visit of an $n$-node tree of height $h$ under the assumption that a node can be accessed only through its father or its children.  For both problems we propose efficient algorithms that run on a $p$-processor distributed-memory machine. For backtrack search, we give a deterministic algorithm running in $\BO{n/p+h\log p}$ time, and a Las Vegas algorithm requiring optimal $\BO{n/p+h}$ time, with high probability. Building on the backtrack search algorithm, we also derive a Las Vegas algorithm for branch-and-bound which runs in $\BO{(n/p+h\log p \log n)h\log^2 n}$ time, with high probability. A remarkable feature of our algorithms is the use of only constant space per processor, which constitutes a significant improvement upon previous algorithms whose space requirements per processor depend on the (possibly huge) tree to be explored.
\end{abstract}

\section*{Introduction}

The exact solution of a combinatorial (optimization) problem is often
computed through the systematic exploration of a tree-structured
solution space, where internal nodes correspond to partial solutions
(growing progressively more refined as the depth increases) and leaves
correspond to feasible solutions. A suitable algorithmic
template used to study this type of problems (originally
proposed in \cite{KSW86}) is the exploration of a tree $\T$ under the
constraints that:
\begin{inparaenum}[(i)] 
\item only the tree root is initially known; 
\item the structure, size and height of the tree are unknown;
and \item a tree node can
be accessed if it is the root of the tree or 
if either its father or one of its children is available.
\end{inparaenum} 

In the paper, we focus on two important instantiations of the above
template. The \emph{backtrack search problem} \cite{KZ93} requires to
explore the entire tree $\T$ starting from its root $r$, so to
enumerate all solutions corresponding to the leaves. In the
\emph{branch-and-bound problem}, each tree node is associated to
a cost, and costs satisfy the min-heap order property, so that the cost
of an internal node is a lower bound to the cost of the solutions
corresponding to the leaves of its subtree.  The objective here is to
determine the leaf associated with the solution of minimum cost. We
define $n$ and $h$ to be, respectively, the number of nodes and the height
of the tree to be explored. It is important to remark that in the
branch-and-bound problem, the nodes that must necessarily be
explored are only those whose cost is less than or equal to the cost of
the solution to be determined. These nodes form a subtree $\T^*$ of $\T$
and in this case $n$ and $h$ refer to $\T^*$.  Assuming that a node is
explored in constant time, it is easy to see that the solution to the
above problems requires $\BOM{n}$ time, on a sequential machine, and
$\BOM{n/p+h}$ time on a $p$-processor parallel machine.
 
Due to the elevated computational requirements of search problems,
many parallel algorithms have been proposed in literature that
speed-up the execution by evenly distributing the computation among
the available processing units. All these studies have focused mainly
on reducing the running time while the resulting memory requirements
(expressed as a function of the number of nodes to be stored locally
at each processor) may depend on the tree parameters. However, the
search space of combinatorial problems can be huge, hence it is
fundamental to design algorithms which exploit parallelism to speed up
execution and yet need a small amount of memory per processor,
possibly independent of the tree parameters.  Reducing space
requirements allows for a better exploitation of the memory hierarchy
and enables the use of cheap distributed-memory parallel platforms
where each processing units is endowed with limited memory.

\subsubsection*{Previous work}
Parallel algorithms for backtrack search have been studied in a number
of different parallel models. Randomized algorithms have been
developed for the complete network
\cite{KZ93,LAB93} and the butterfly network \cite{R94}, which require
optimal $\BT{n/p+h}$ node explorations (ignoring the overhead due to
manipulations of local data structures). The work of Herley et
al.~\cite{HPP02} gives a deterministic algorithm running in
$\BO{(n/p+h) (\log \log \log p)^2}$ time on a $p$-processor COMMON CRCW
PRAM. While the algorithm in \cite{KZ93} performs depth-first
explorations of subtrees locally at each processor requiring $\BOM{h}$
space per processor, the other algorithms mostly concentrate on
balancing the load of node explorations among the available processors
but may require $\BOM{n/p}$ space per processor. 

In \cite{KZ93} an $\BT{n/p+h}$-time randomized algorithm for
branch-and-bound is also provided for the complete network.  In
\cite{HPP99ppl,HPP99} Herley et al. show that a parallelization of the
heap-selection algorithm of \cite{F90} gives, respectively, a
deterministic algorithm running in time $\BO{n/p+h\log^2 (np)}$ on an
EREW-PRAM, and one running in time $\BO{(n/p+h\log^4p)\log \log p}$ on
the Optically Connected Parallel Computer (OCPC), a weak variant of
the complete network \cite{AM88}. All of these works adopt a
best-first like strategy, hence they may need $\BOM{n/p}$ space per
processor.  In \cite{KP94} deterministic algorithms for both backtrack
search and branch-and-bound are given which run in $\BO{\sqrt{nh}\log
n}$ time on an $n$-node mesh with constant space per processor.
However, any straightforward implementation of these algorithms on a
$p$-processor machine, with $p < n$, would still require $\BOM{n/p}$
space per processor. Karp et al.~\cite{KSW86} describe sequential
algorithms for the branch-and-bound problem featuring a range of
space-time tradeoffs. The minimum space they attain is $\BO{\sqrt{\log
n}}$ in time $\BO{n 2^{O(\sqrt{\log n})}}$\footnote{The
authors claim a constant-space randomized algorithm running in
$\BO{n^{1+\epsilon}}$ time which, however, disregards the nonconstant
space required by the recursion stack.}.  Some papers (see~\cite{MD99}
and references therein) describe sequential and parallel algorithms
for branch-and-bound with limited space, which interleave depth-first
and breadth-first strategies, but provide no analytical guarantee on
the running time.

\subsubsection*{Our Contribution}\label{sec:contrib}
In this paper, we present space-efficient parallel algorithms for the
backtrack search and branch-and-bound problems. The algorithms are
designed for a $p$-processor distributed-memory message-passing system
similar to the one employed in \cite{KZ93}, where in one time step
each processor can perform $\BO{1}$ local operations and send/receive
a message of $\BO{1}$ words to/from another arbitrary processor. In
case two or more messages are sent to the same processor in one step, we
make the restrictive assumption that none of these messages is
delivered (as in the OCPC model \cite{AM88,GJLR97}). Consistently with most
previous works, we assume that a memory word is sufficient to store a
tree node, and, as in~\cite{KSW86}, we also assume that, given a tree
node, a processor can generate any one of its children or its father
in $\BO{1}$ steps and $\BO{1}$ space.  

For the backtrack search problem we develop a deterministic algorithm
which runs in $\BO{n/p+h\log p}$ time, and a Las Vegas randomized
algorithm which runs in optimal $\BT{n/p+h}$ time with high
probability, if $p = \BO{n/\log n}$. Both algorithms require only
constant space per processor and are based on a nontrivial lazy
implementation of the work-distribution strategy featured in the
backtrack search algorithm by \cite{KZ93}, whose exact implementation
requires $\BOM{h}$ space per processor.  By using the deterministic
backtrack search algorithm as a subroutine, we develop a Las Vegas
randomized algorithm for the branch-and-bound problem which runs in
$\BO{(n/p+h \log p\log n)h\log^2 n}$ time with high probability, using
again constant space per processor.

To the best of our knowledge, our backtrack search algorithms are the
first to achieve (quasi) optimal time using constant
space per processor, which constitutes a significant
improvement upon the aforementioned previous works. As for the branch-and-bound
algorithm, while its running time may deviate substantially from the
trivial lower bound, for search spaces not too deep and sufficiently
high parallelism, it achieves sublinear time using
constant space per processor. For instance, if $h = \BO{n^\epsilon}$
and $p = \BT{n^{1-\epsilon}}$, with $0 < \epsilon < 1/2$, the
algorithm runs in $\BO{n^{2\epsilon} \mbox{polylog}(n)}$ time, with
high probability, using $\BT{n^{1-\epsilon}}$ aggregate space. Again,
to the best of our knowledge, ours is the first algorithm achieving
sublinear running time using sublinear (aggregate) space, thus
providing important evidence that branch-and-bound can be parallelized
in a space-efficient way.

For simplicity, our results are presented assuming that the
tree $\T$ to be explored is binary and that each internal node has
both left and right children. The same results extend to the
case of $d$-ary trees, with $d = \BT{1}$, and to trees that allow
an internal node to have only one child. 


The rest of the paper is organized as follows. In Section
\sel{Space-Efficient Backtrack Search} we first present a generic
strategy for parallel backtrack search and then instantiate this
strategy to derive our deterministic and randomized algorithms. In
Section \sel{Space-Efficient Branch-and-Bound} we describe the
randomized parallel algorithm for branch-and-bound.  In Section
\sel{Conclusions} we give some final remarks and indicate some
interesting open problems.

\section*{Space-Efficient Backtrack Search}\label{sec:bs}
In this section we describe two parallel algorithms, a deterministic
algorithm and a Las Vegas randomized algorithm, for the backtrack
search problem.  Both algorithms implement the same strategy described
in Subsection \sel{Generic Strategy} below and require constant space per processor. The
deterministic implementation of the generic strategy (Section
\sel{Deterministic Algorithm}) requires global synchronization, while
the randomized one (Section \sel{Randomized Algorithm}) avoids
explicit global synchronization. In the rest of the paper, we let
$\Proc_0,\Proc_1, \ldots ,\Proc_{p-1}$ denote the processors in our
system.

\subsection*{Generic Strategy}\label{sec:general}
The main idea behind our generic strategy moves along the same lines
as the backtrack search algorithm of \cite{KZ93}, where at each time a
processor is either \emph{idle} or \emph{busy} exploring a certain
subtree of $\mathcal T$ in a depth-first fashion.  The computation
evolves as a sequence of \emph{epochs}, where each epoch consists of
three consecutive phases of fixed durations: (1) a \emph{traversal
phase}, where each busy processor continues the depth-first
exploration of its assigned subtree; (2) a \emph{pairing phase}, where
some busy processors are matched with distinct idle processors; and (3)
a \emph{donation phase}, where each busy processor $\Proc_i$ that was
paired with an idle processor $\Proc_j$ in the preceding phase,
attempts to entrust a portion of its assigned subtree to $\Proc_j$,
which becomes in charge of the exploration of this portion.

In \cite{KZ93} it is shown that the best progress towards completion
is achieved by letting a busy processor donate the \emph{topmost}
unexplored right subtree of the subtree which the processor 
is currently exploring. A straightforward implementation of this 
donation rule requires that a
busy processor either stores a list of up to $\BT{h}$ nodes, or, at
each donation, traverses up to $\BT{h}$ nodes in order to retrieve
the subtree to be donated, thus incurring a large time overhead.  As
anticipated in the introduction, our algorithm features a lazy
implementation of this strategy which uses constant space
per processor but incurs only a small time overhead.

We now describe in more detail how the three phases of an epoch are
performed. At any time, a busy processor $\Proc_i$ maintains the
following information, which can be stored in constant space:

\begin{itemize}
\item 
$r_i$: the root of its assigned subtree;
\item 
$v_i$: the next node to be touched by the processor in the depth-first
exploration of its assigned subtree;
\item 
$d_i \in \{\mathtt{left},\mathtt{right},\mathtt{parent}\}$: a 
\emph{direction flag} identifying the direction
where the exploration must continue after touching $v_i$.
\item 
$(t_i,q_i)$: a pair of nodes that are used to identify a portion of
the subtree to donate to an idle processor; in particular, $t_i$ is a
node on the path from $r_i$ to $v_i$, while
$q_i$ is either the right child of $r_i$ or is undefined.  We refer to
the path from $t_i$ up to $r_i$ as the \emph{tail} associated with
processor $\Proc_i$, and define the tail's length as the number of edges
it comprises.
\end{itemize}
At the beginning of the first epoch, only processor $\Proc_0$ is busy
and its variables are initialized as follows: $r_0$ is set to the root
of the tree $\mathcal T$ to be explored; $v_0=t_0=r_0$; $q_0$ is set
to the right child of $r_0$; and $d_0 = \mathtt{left}$. Consider now an
arbitrary epoch, and let $\D{t}$, $\D{p}$, and $\D{d}$ 
denote suitable values which will be fixed by the analysis.

\paragraph*{Traversal phase}
Each busy processor $\Proc_i$ advances of at most $\D{t}$ steps in the
depth-first exploration of the subtree rooted at $r_i$, starting from
$v_i$ and proceeding in the direction indicated by $d_i$. Variables
$v_i$ and $d_i$ are updated straightforwardly at each step, in accordance with the depth-first
exploration sequence. In some cases, $r_i$ and $t_i$ must also be
updated. In particular, $r_i$ is updated when $v_i=r_i$ and $d_i =
\mathtt{right}$. In this case, denoting by $w$ the right child of
$r_i$, in the next step both $r_i$ and $v_i$ are set to $w$, $d_i$ to
$\mathtt{left}$, and $q_i$ to $w$'s right child.  Instead, $t_i$ is
updated when $v_i=t_i$ and $d_i = \mathtt{parent}$. In this case, in
the next step both $t_i$ and $v_i$ are set to $v_i$'s parent. Also,
$t_i$ is updated when $t_i=r_i$ and $r_i$ is updated. In this case $t_i$ 
is set always to the new value of $r_i$. $\Proc_i$
finishes the exploration of its assigned subtree and becomes idle 
after touching $v_i$ with $v_i=r_i$ and $d_i=\mathtt{parent}$.

\paragraph*{Pairing phase}
Busy and idle processors are paired in preparation of the subsequent
donation phase. The phase runs for $\D{p}$ steps.  Different
pairing mechanisms are employed by the deterministic and the
randomized algorithm, as described in detail in the respective sections.

\paragraph*{Donation phase}

\begin{figure}
\includegraphics[width=\textwidth]{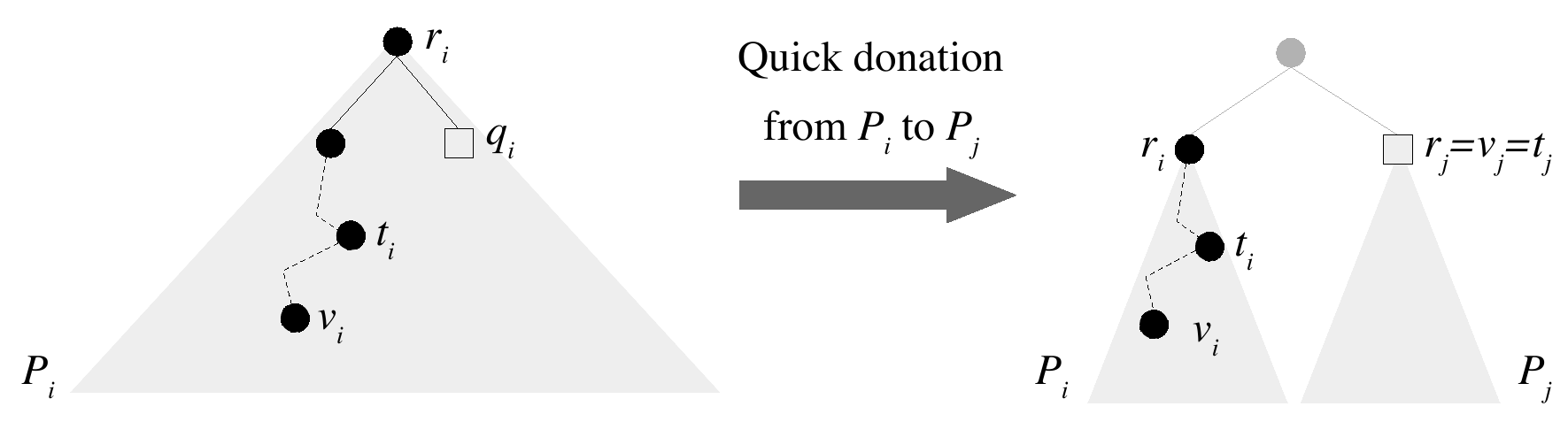}
\caption{Example of quick donation. Subtrees are denoted by shaded
area and the processors in charge of their explorations, before and
after the donation, are indicated at the bottom.  Grey circles denote
nodes that will not be touched again; black circles denote nodes have
already been touched but will be touched again; and squares denote
nodes that have not been touched yet.\label{fig:quick}}
\end{figure}

Consider a busy processor $\Proc_i$ that has been paired to an idle
processor $\Proc_j$. Two types of donations from $\Proc_i$ to
$\Proc_j$ are possible, namely a \emph{quick donation} or a \emph{slow
  donation}, depending on the status of $q_i$. As we will see, a quick
donation always starts and terminates within the same epoch, assigning
a subtree to explore to $\Proc_j$, while a slow donation may span
several epochs and may even fail to assign a subtree to $\Proc_j$.

If $q_i$ is defined, hence, it is the right child of $r_i$, a quick
donation occurs (see Figure~\ref{fig:quick}).  In this case, $\Proc_i$
donates to $\Proc_j$ the subtree rooted in $q_i$ and $\Proc_i$ keeps
the subtree rooted at the left child of $r_i$ for exploration. Thus,
$\Proc_j$ sets $r_j, v_j$ and $t_j$ all equal to $q_i$. If $q_i$ is a
leaf, then $\Proc_j$ sets $d_j$ to $\mathtt{parent}$ and $q_j$ to
undefined, otherwise it sets $q_j$ to the right child of $q_i$ and
$d_j$ to $\mathtt{left}$. Instead, $\Proc_i$ sets $r_i$ to $r_i$'s
left child and $q_i$ to undefined, while $v_i$ and $d_i$ remain
unchanged.  Also, if $t_i$ was equal to $r_i$ it is reset to the new
value of $r_i$, otherwise it remains unchanged.  (Note that quick
donation coincides with the donation strategy in~\cite{KZ93}).

If $q_i$ is undefined, a slow donation is performed where the tail
associated with $\Proc_i$ is climbed upwards to identify an unexplored
subtree which is then donated to $\Proc_j$.  To amortize the cost of
tail climbing, $\Proc_i$ attempts to donate a subtree rooted at a node
located in the middle of the tail, so to halve the length of the
residual tail that $\Proc_i$ has to climb in future slow
donations. This halving is crucial for reducing the running time.

Let us see in more detail how a slow donation is accomplished.
Initially, $\Proc_i$ verifies if a new tail must be created. This
happens if $t_i=r_i$. In this case, a \emph{tail creation} is
performed by setting $t_i = v_i$. Then, two cases are possible
depending on the tail length.

{\bf Case 1: tail length $\mathbf{\leq 1}$.}  Suppose that the tail
length is 0, that is, $t_i=v_i=r_i$. Because of the way $r_i$ is
updated in the traversal phase, it can be easily seen that in this
case $d_i = \mathtt{right}$, since otherwise $q_i$ would be defined.
Hence $\Proc_i$ must have already fully explored the left subtree of
$r_i$.  Analogously, the left subtree of $r_i$ has been fully explored
by $\Proc_i$ if $t_i$ is the right child of $r_i$ (tail length 1). In
both cases, no donation is performed and, since the current root is no
longer needed, $\Proc_i$ sets $r_i, v_i$ and $t_i$ to the right child
of $r_i$, and $d_i$ to $\mathtt{left}$. If instead, $t_i$ is the left
child of $r_i$ (tail length 1), then $\Proc_i$ donates to $\Proc_j$
the subtree rooted at the right child of $r_i$, performing the same
steps of a quick donation.  Note that in all cases, the level of the
root of the subtree assigned to $\Proc_i$ increases by 1.

\begin{figure}
\includegraphics[width=\textwidth]{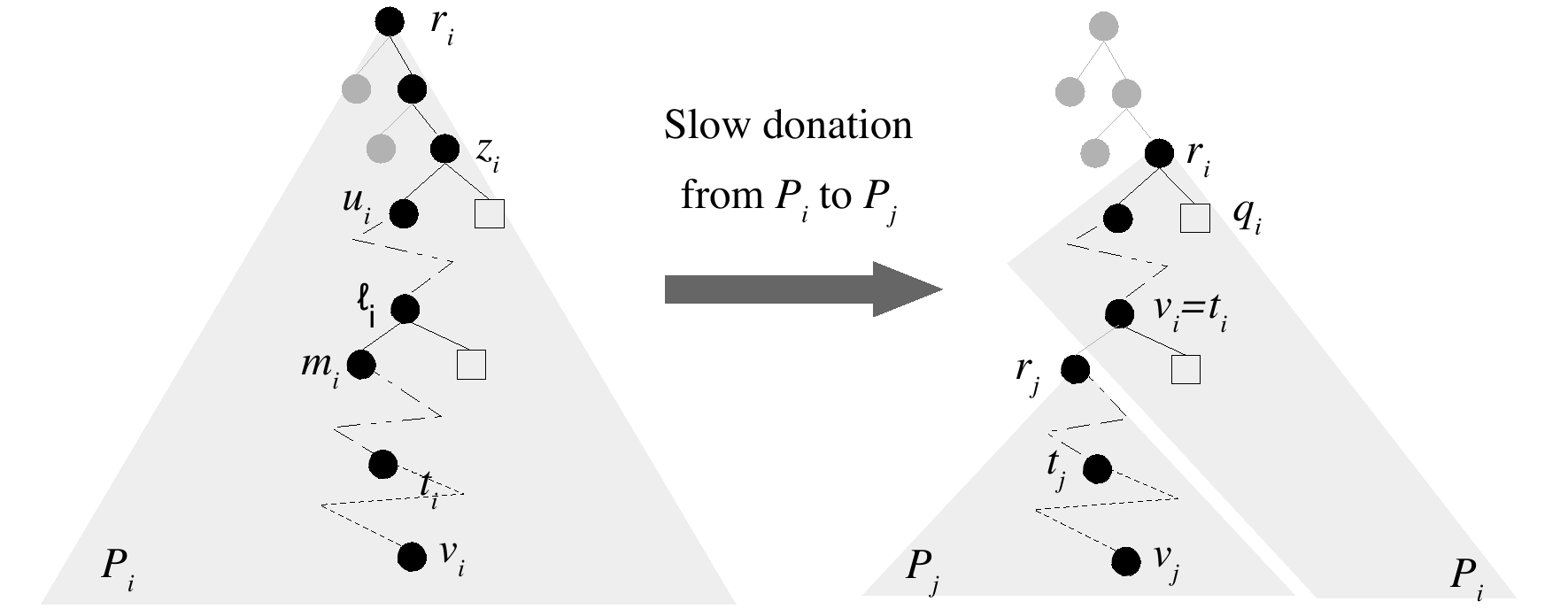}
\caption{Example of slow donation when the tail length is larger than
1. The tail is the path from $r_i$ to $t_i$. Node $u_i$ is the left
child closest to $r_i$ along the tail. Subtrees are denoted by shaded
area and the processors in charge of their explorations, before and
after the donation, are indicated at the bottom.  Grey circles denote
nodes that will not be touched again; black circles denote nodes have
already been touched but will be touched again; and squares denote
nodes that have not been touched yet.\label{fig:slow}}
\end{figure}

{\bf Case 2: tail length $\mathbf{>1}$.} 
First, processor $\Proc_i$ identifies the middle node $m_i$ of the
tail by backtracking twice from $t_i$ to $r_i$.  Then $\Proc_i$
donates to $\Proc_j$ the (partially explored) subtree rooted at $m_i$,
and $\Proc_j$ sets $r_j=m_i$, $v_j=v_i$, $d_j=d_i$, $t_j=t_i$, and
sets $q_j$ to undefined.  While backtracking to identify $m_i$,
$\Proc_i$ seeks the node $u_i$ along the path from $m_i$ to $r_i$
which is closest to $r_i$ and is the left child of its parent
$z_i$. If such a node does not exist, then $\Proc_i$ becomes idle
since only the exploration of the subtree donated to $\Proc_j$ needed
completion. If instead, $u_i$ is identified, all nodes in the path
from its parent $z_i$ (excluded) to $r_i$ (included) are unnecessary
to complete the exploration of the subtree rooted at $r_i$, since they
and their left subtrees have already been explored. Therefore
$\Proc_i$ continues the exploration by setting $r_i$to $z_i$, $q_i$ to
the right child of $z_i$, and both $v_i$ and $t_i$ to the parent
$\ell_i$ of $m_i$.  Also, $d_i$ is set to $\mathtt{right}$ if $m_i$
is the left child of $\ell_i$, or to $\mathtt{parent}$ otherwise.
This case of slow donation is depicted in Figure~\ref{fig:slow}.
Note that the level of the root of the subtree assigned to $\Proc_j$
is always greater than the level of the root of the subtree assigned
to $\Proc_i$. Moreover, the level of the root of the subtree assigned
to $\Proc_i$ either increases or remains unchanged. In this latter
case, however, $q_i$ can be set during the tail traversal so that the
next donation of $\Proc_i$ will be a quick donation.

The donation phase runs for $\D{d}$ steps, where we assume
$\D{d}$ to be greater than or equal to the maximum between the 
time for a
quick donation and the time for Case 1 of a slow donation. However,
for efficiency reasons, $\D{d}$ cannot be chosen large enough
to perform entirely Case 2 of a slow donation, since
its duration is proportional to the tail length, which may be rather large.
In this case, if $\Proc_i$ does not conclude the donation
in $\D{d}$ steps, it saves its state (requiring constant
space) at the end of the donation phase and resumes the computation
in the donation phase of the subsequent epoch, in which it 
maintains the pairing with $\Proc_j$ and refuses any further pairing. If
$t_i$ changes in the subsequent traversal phase, the state is
updated accordingly: namely, if $t_i$ is set to its father, the tail
length is updated and, if needed, $m_i$ is moved to its father to 
ensure that it remains as the middle node between $t_i$ and $r_i$. Also,
if the tail length becomes at most one, the slow donation switches
from Case 2 to Case 1.

It is easy to check that the above algorithms touches all the nodes in
the tree ${\mathcal T}$, therefore solving the backtrack search
problem.

\subsection*{Deterministic Algorithm}\label{sec:det}
In the deterministic algorithm each pairing phase is performed through
a prefix-like computation that finds a maximal matching between idle
processors and busy processors; such computation requires $\BT{\log
  p}$ parallel time. For this algorithm we set $\D{p}, \D{d} =
\BT{\log p}$, and $\D{t}=\D{d}/\kappa$, for a suitable constant
$\kappa$ defined in the proof. We call an epoch \emph{full} if in the
last step of its traversal phase at least $p/2$ processors are busy,
and we call it \emph{non-full} otherwise.
\begin{lemma}
\label{lem:full}
The total number of parallel steps in full epochs is $\BO{n/p}$.
\end{lemma}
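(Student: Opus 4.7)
The plan has three steps. First, I would establish a monotonicity observation: within a single traversal phase no idle processor becomes busy, since idle-to-busy transitions occur exclusively in donation phases. Combined with the defining property of a full epoch (at least $p/2$ processors busy at the last step of the traversal phase), this implies that at least $p/2$ processors are busy \emph{throughout} the entire traversal phase of a full epoch, so they jointly perform at least $(p/2)\D{t}$ depth-first steps during that phase.

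Second, I would bound the total number of depth-first steps executed across all traversal phases of all epochs by $O(n)$. The depth-first steps are edge traversals of $\T$, and they split into ``baseline'' traversals (at most one down and one up per edge in an honest DFS, totaling $O(n)$) and ``re-touches'' generated by slow donations: when $\Proc_i$ transfers the subtree rooted at $m_i$ to $\Proc_j$ with $v_j:=v_i$, the path from $v_i$ back up to $m_i$ is retraversed by $\Proc_j$. The re-touch cost is amortized via the tail-halving rule: any tail of length $L$ is created by $L$ downward baseline traversals, and the sequence of slow donations it supports has geometrically decreasing tail lengths whose re-touches sum to $O(L)$; hence re-touches inflate the baseline by at most a constant factor, keeping total DFS work at $O(n)$.

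Combining the two bounds, the number of full epochs is at most $O(n/(p\D{t}))$. Each epoch lasts $\D{t}+\D{p}+\D{d}=\Theta(\log p)$ parallel steps (we have $\D{p},\D{d}=\Theta(\log p)$ by design and $\D{t}=\D{d}/\kappa=\Theta(\log p)$ for the constant $\kappa$ to be chosen), so the total parallel time spent in full epochs is $O(n/(p\log p))\cdot\Theta(\log p)=O(n/p)$, as claimed. The main obstacle is the amortization in the second step: re-touches from slow donations must be tied back to the downward progress that created the tails so that the geometric decay of tail lengths implied by the halving rule yields a clean $O(n)$ total, rather than the weaker $O(n\log n)$ or $O(n+ph)$ bounds one gets from a naive accounting that would only give a total parallel time of $O(n/p + h)$ for full epochs.
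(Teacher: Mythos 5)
Your proposal is correct and follows essentially the same route as the paper: the paper likewise argues that the total number of node touches across traversal phases is $\BO{n}$ (each node touched $\BO{1}$ times by the depth-first accounting), that a full epoch forces $\BT{p}$ processors to each perform $\BT{\log p}$ traversal steps, and hence that there are $\BO{n/(p\log p)}$ full epochs of $\BO{\log p}$ parallel steps each. Your added care in step two (making the busy-set monotonicity explicit and amortizing donation-induced re-touches against tail halving) only spells out details the paper leaves implicit, and does not change the argument.
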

\begin{proof}

Since each node is touched at most 3 times in a traversal phase (after
descending from the parent, after exploring the left subtree, and
after exploring the right subtree), the total number of times nodes
are touched is $\BO{n}$. The lemma follows by observing that in a full
epoch $\BT{p}$ processors touch $\BT{\log p}$ nodes each, and that the
epoch runs in $\BO{\log p}$ parallel steps.
\end{proof}

Consider an arbitrary node $q$ of ${\mathcal T}$.  Now, we bound
the number of parallel steps in non-full epochs before $q$ is touched. 
Observe that after
all leaves have been touched, the algorithm terminates in $\BO{h+\log
p}$ additional parallel steps, when all busy processors have gone back
to the roots of their assigned subtrees. 
In each epoch, we define the \emph{special processor} of $q$ as the processor exploring the
subtree containing $q$ with the deepest root; note that there is a
unique special processor in any epoch. When the special processor $S$
performs a donation to a processor $\Proc_j$, then for the subsequent
epoch either $S$ remains the special processor or $\Proc_j$ becomes
the special processor.
We denote with $\mathcal{T}_q$ the subtree of $\mathcal T$ containing nodes not larger than $q$, and with $n_q$ and $h_q$ its size and height. 

We refer to non-full epochs as \emph{donating} or \emph{preparing}
depending on the status of the special processor of $q$. Namely, a non-full
epoch is donating if the special processor $S$ completes a donation in
the epoch, while it is preparing if $S$ is involved in Case 2 of a
slow donation and, at the end of the epoch, it has not finished to
execute all operations prescribed by this type of donation.  Note
that, before $q$ is touched, any non-full epoch is always either donating or
preparing.
\begin{lemma}
\label{lem:don}
The total number of parallel steps in donating epochs 
before node $q$ is touched is $\BO{h_q \log p}$.
\end{lemma}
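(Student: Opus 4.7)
The plan is to bound the number of donating epochs by $O(h_q)$ and multiply by the per-epoch cost $O(\log p)$ to obtain the claimed $O(h_q\log p)$ bound. Since $\D{t},\D{p},\D{d}$ are all $\Theta(\log p)$ in the deterministic algorithm, the per-epoch bound is immediate, so the real work is controlling the number of donating epochs.

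I would introduce a potential $L$ equal to the depth in $\mathcal T$ of the root of the subtree currently held by the special processor $S$ of $q$. Because that subtree always contains $q$, its root is an ancestor of $q$ (possibly $q$ itself), hence $L\le h_q$ throughout. The core argument is a case analysis on the donation completed by $S$ in a donating epoch. Quick donations, and every sub-case of Case 1 of a slow donation, push $L$ up by exactly one: either $S$ retains a subtree whose root is one level deeper, or the paired idle processor becomes the new special processor with a root one level deeper; in the ``no donation'' sub-cases of Case 1 one uses the fact that, since $r_i$'s left subtree has been fully explored and $q$ is still untouched, $q$ must lie in the right subtree, which becomes the new root. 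For Case 2 of a slow donation, if the donated subtree rooted at $m_i$ (strictly deeper than $r_i$, since the tail has length greater than one) contains $q$, then the recipient $\Proc_j$ becomes special and $L$ strictly increases; otherwise $S$ remains $\Proc_i$ with new root $z_i$, and either $z_i$ lies strictly below $r_i$ (so $L$ increases) or $z_i=r_i$, in which case $L$ is unchanged but $q_i$ gets set to the right child of $r_i$.

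The observation closing the argument is that in the only sub-case where $L$ fails to increase, $q_i$ of $S$ has just become defined, so the very next donation performed by $S$ will be quick and will therefore raise $L$ by one. Consequently two non-$L$-increasing donating epochs for $S$ cannot occur consecutively, which bounds the total number of donating epochs by $2h_q+O(1)$; multiplying by $O(\log p)$ per epoch yields $O(h_q\log p)$. The step I expect to be most delicate is the Case 2 analysis --- in particular, checking that in the $z_i=r_i$ sub-case the newly assigned $q_i$ genuinely points to a subtree not yet touched (so that the subsequent quick donation really advances $L$), and handling the tail-creation preamble that may precede Case 2 without disturbing the potential argument.
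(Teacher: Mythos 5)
Your proposal is correct and follows essentially the same route as the paper's proof: the paper also tracks the level of the root of the special processor's subtree, argues that quick donations and Case~1 of slow donations raise it by one, and observes that in the single Case~2 sub-case where the level stagnates, $q_i$ becomes defined so the next donation is necessarily quick, yielding an increase of at least one every two donating epochs and hence $\BO{h_q}$ donating epochs of $\BO{\log p}$ steps each. Your explicit potential-function framing and the more detailed sub-case bookkeeping (e.g., why $q$ must lie in the retained right subtree in the no-donation sub-cases of Case~1) are just a more verbose rendering of the same argument.
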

\begin{proof}
We claim that the level of the root of the subtree explored by special
processor $S$ increases by at least one after at most two donating
epochs. If a quick donation, or Case 1 of slow donation is performed
by $S$, then the claim is verified. Suppose $S$ is involved in Case 2
of a slow donation. Let $S=\Proc_i$ and let $\Proc_j$ be the processor
paired to $\Proc_i$. If after the donation $\Proc_i$ remains the
special processor and the root $r_i$ of its subtree is unchanged, then
during the slow donation $q_i$ has been set and hence the next donation
of the special processor is a quick donation. In all other cases,
the level root of the special processor is increased after the
donation. Thus, the claim is proved.  Since the height of $\mathcal T_q$  is $h_q$, there are $\BO{h_q}$ donating epochs and the total
number of parallel steps in donating epochs is $\BO{h_q\log p}$.
\end{proof}

We now bound the total number of parallel steps in
preparing epochs. This number is function of the number $E_q$ of parallel steps in full epochs before $q$ is touched. We observe that $E_q$ depends on $q$ but also on the the size and height of tree $\mathcal T$ (we do not represent this dependency for notational simplicity). 

\begin{lemma}
\label{lem:prep}
Let $E_q$ be the number of steps in full epochs before node $q$ is touched. Then, the total number of parallel steps in preparing epochs 
before node $q$ is touched is  $\BO{E_q+h_q \log p}$.
\end{lemma}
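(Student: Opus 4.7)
The plan is to bound $P_q$, the number of preparing epochs occurring before $q$ is touched, and then conclude since every preparing epoch takes $\Theta(\log p)$ parallel steps. My target is $P_q = O(F_q + D_q)$, where $F_q$ and $D_q$ are the numbers of full and donating epochs before $q$ is touched; by Lemmas~\ref{lem:full} and~\ref{lem:don} these satisfy $F_q = O(E_q/\log p)$ and $D_q = O(h_q)$. The first observation is that, because $\D{d} = \Theta(\log p)$ is large enough to complete any quick donation or Case~1 slow donation inside a single donation phase, a preparing epoch can arise only when the special processor of $q$ is midway through a Case~2 slow donation. Since a Case~2 donation starting with tail length $\tau$ performs $O(\tau)$ sequential tail-climbing work, it spans $\lceil O(\tau)/\D{d}\rceil$ consecutive epochs, of which only the last is donating, contributing $O(\tau/\log p)$ preparing epochs.

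Next I would group the Case~2 donations of the special processor(s) into \emph{tail phases}: a phase begins at a tail creation (the instant $t_i \leftarrow v_i$ is executed while $t_i = r_i$) and ends when $t_i$ is next restored to $r_i$, which is achieved by a Case~1 slow donation (or, exceptionally, by $t_i$ climbing back to $r_i$ during traversal or by the special processor becoming idle). By the halving property explicitly guaranteed by the strategy, the tail lengths appearing inside one phase form a sequence $\tau_0, \le\tau_0/2, \le\tau_0/4, \ldots$, so the Case~2 donations of the phase together contribute $O(\tau_0/\log p)$ preparing epochs. Each phase ends with a donation that completes in some donating epoch, so the number of phases is at most $D_q + O(1)$. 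Hence $P_q \le c_1 \sum_{\text{phases}} \tau_0/\log p + c_2 D_q$ for absolute constants $c_1, c_2$.

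To bound $\sum \tau_0$, I would use that at a tail creation $\tau_0 = \mathrm{depth}(v_i) - \mathrm{depth}(r_i)$ equals the current DFS depth of the cursor below the subtree root, a quantity that can only grow through downward traversal steps of the current special processor; and that at the end of every phase the halving has brought the tail length down to $\le 1$, after which Case~1 reduces the depth by at most one. Consequently, $\tau_{0,i+1}$ is bounded by the downward traversal steps the special processor(s) perform between the end of phase~$i$ and the tail creation opening phase~$i+1$. Summing over phases and using that every epoch awards the current special processor exactly $\D{t}$ traversal steps yields $\sum \tau_0 \le \D{t}(F_q + D_q + P_q)$.

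The main obstacle is the resulting self-referential inequality $P_q \le (c\,\D{t}/\D{d})(F_q + D_q + P_q) + c_2 D_q$: the depth budget that lets individual Case~2 donations be long is itself generated in the traversal phases of all epochs, preparing ones included. This is precisely what the algorithm circumvents by setting $\D{t} = \D{d}/\kappa$ for a free constant $\kappa$. Choosing $\kappa$ larger than $2c$ makes the coefficient of $P_q$ on the right at most $1/2$, so the self-reference collapses and delivers $P_q = O(F_q + D_q) = O(E_q/\log p + h_q)$. Multiplying by $\Theta(\log p)$ parallel steps per epoch produces the claimed $O(E_q + h_q\log p)$ bound.
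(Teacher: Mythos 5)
Your proposal is correct and follows essentially the same route as the paper's proof: your ``tail phases'' are the paper's \emph{eras}, the halving of tail lengths within a phase and the charging of each phase's initial tail length to traversal work performed in earlier epochs are exactly the paper's key steps ($T_i^{j+1}\le T_i^j/2$ and the decomposition into nodes first touched in full, donating, and preparing epochs), and the choice $\kappa=\D{d}/\D{t}>2c$ plays the identical damping role. The only cosmetic difference is that you resolve the self-reference through one aggregated inequality $P_q\le \alpha(F_q+D_q+P_q)+c_2D_q$ with $\alpha\le 1/2$, whereas the paper unfolds the per-era recurrence $C_i\le(E_{i-1}+D_{i-1}+C_{i-1})/2$ into a geometric series; both yield the same $\BO{E_q+h_q\log p}$ bound.
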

\begin{proof}
Consider the time interval from the beginning of the algorithm until
leaf $q$ is explored. Clearly, at any time within this interval a
special processor is defined.  We partition this interval into
\emph{eras} delimited by subsequent donation phases in which tail
creations are performed by the special processor.  (Recall that a
processor $\Proc_i$ creates a tail in the donation phase of an epoch
whenever $q_i$ is undefined, $t_i=r_i$ and $v_i \neq r_i$: then the
tail is created by setting $t_i=v_i$.) More precisely, for $i \geq 1$,
the $i$-th era begins at the donation phase of the $i$-th tail
creation, and ends right before the donation phase of the $(i+1)$-st
tail creation (or the end of the interval). Observe that the beginning
of the interval does not coincide with the beginning of the first era,
however no preparing epochs occur before the first tail creation. Note
that an era may involve more than one donation from the special
processor, and that all preparing epochs in the same era work on
segments of the tail whose creation defines the beginning of the era.
We denote with $\Phi$ the number of eras and with $\phi_i\geq 1$ the
number of slow donations in the $i$-th era, for each $1\leq i \leq
\Phi$ .

Let $T_i^j$ be the number of distinct nodes that the special processor
touches by walking up a subtree to prepare the $j$-th slow donation of
the $i$-th era, with $1\leq i \leq \Phi$ and $1 \leq j \leq \phi_i$
(nodes can be touched in both donating and preparing epochs). Since a
slow donation splits the tail in half, we have that $T_i^{j+1} \le
{T_i^{j}}/{2}$ for all $1\leq j \leq \phi_i$. Since the number of
steps in preparing epochs for one slow donation is at most proportional
to the tail length, the total time spent in preparing epochs is
$\sum_{i=1}^{\Phi} \sum_{j=1}^{\phi_i} c T_i^j \le 2c
\sum_{i=1}^{\Phi} T_i^1$, where $c\geq 1$ is a suitable
constant. We have $T_1^1\leq E_q + \BO{\log p}$ since the first era starts, in the worst case, after all full epochs and after the traversal phase of the first non-full epoch (since the special processor always receives a donation request in the first non-full epoch).

Consider an arbitrary era $i\ge2$. A node $u$ in the tail of the
era has been touched for the first time in a traversal phase of an era
$\ell<i$. Note that $\ell=i-1$ since if it was $\ell<i-1$, $u$ would
have been part of a tail created in an era before the $i$-th one and
it is easy to verify that tails of different eras are
disjoint. Therefore the number of nodes touched by the special processor
(walking upwards in the
tree) in the preparing epochs for the first donation of era $i$ is
bounded by the number of nodes touched by the special processor
in the traversal phases of era $i-1$,
which can be partitioned in three (disjoint) sets:

\begin{itemize}
\item the nodes touched for the first time in traversal phases of  full epochs in era
$i-1$; we denote the number of such nodes as $E_{i}$;
\item the nodes touched for the first time in the traversal phases of donating epochs in
era $i-1$; we denote the number of such nodes as $D_{i}$;
\item the nodes touched for the first time in the traversal phases of preparing epochs in
era $i-1$; we denote the number of such nodes as $C_{i}$.
\end{itemize}

Thus we have $\sum_{i=2}^{\Phi}  T_i^1 \le \sum_{i=2}^{\Phi} E_{i} +  \sum_{i=2}^{\Phi}
D_{i} +\sum_{i=2}^{\Phi} C_{i}$. By assumption we have $\sum_{i=2}^{\Phi} E_{i}
= E_q$, while by Lemma~\ref{lem:don} it follows that  $\sum_{i=2}^{\Phi} D_{i} =
\BO{h_q \log p}$. We now only need to bound $\sum_{i=2}^{\Phi} C_{i}$. Remember that $C_{i}$
is the number of nodes touched in the preparing epochs of the $i$-th
era that have been touched for the first time  in the traversal phases of preparing epochs of the $(i-1)$-st
era. Consider the second era: in order to bound $C_2$, we need to bound the number of
nodes that have been touched in the traversal phases of epochs in the first era. Since
$cT_{1}^j$ is an upper bound to the time required for preparing the $j$-th donation in the
first era, and since the number of nodes visited in the traversal phase of a preparing
epoch is at most a factor $1/\kappa$ the time of the respective donation phase, for a
suitable constant $\kappa$ (i.e., $\D{t}=\D{d}/\kappa$), we have $C_2 \le
\sum_{j=1}^{\phi_i} c T_1^j/\kappa \le T_1^1/2$ by setting $\kappa = 2c$. In
general, for era $i>2$ we have: $C_i \le \sum_{j=1}^{\phi_i} {cT_{i-1}^j}/{\kappa} \le
{T_{i-1}^1}/{2} \le (E_{i-1}+D_{i-1}+C_{i-1})/2$. Then, by unfolding the above 
recurrence, we get
$$
C_i \le \frac{1}{2}E_{i-1} + \frac{1}{4}E_{i-2}+\dots+\frac{1}{2^{i-2}} E_2 +
\frac{1}{2}D_{i-1} + \frac{1}{4}D_{i-2}+\dots+\frac{1}{2^{i-2}} D_2 + \frac{1}{2^{i-1}} T_1^1.
$$

Therefore, by summing up among all eras, we have
$$\sum_{i=2}^{\Phi}C_i \le  \sum_{i=1}^{\Phi} \left[ \frac{T_1^1}{2^i} +
\sum_{j=1}^{\infty} \frac{E_i}{2^j} +  
 \sum_{j=1}^{\infty} \frac{D_i}{2^j} \right]
\le T_1^1+ \sum_{i=1}^{\Phi} \left(E_i + D_i\right)
=\BO{E_q+h_q \log p}.$$
As already noticed, the number of steps in preparing epochs is proportional to the
number of nodes touched in such epochs, and this establishes the result.
\end{proof}

By combining the above three lemmas, we obtain the following theorem.
\begin{theorem}\label{th:dbt}
The deterministic algorithm for backtrack search completes in
$\BO{n/p+h\log p}$ parallel steps and constant space per processor.
\end{theorem}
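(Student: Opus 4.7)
The plan is to combine Lemmas~\ref{lem:full}, \ref{lem:don}, and \ref{lem:prep} by specializing the reference node $q$ to the node of $\mathcal{T}$ that is touched last during the execution. With this choice $\mathcal{T}_q=\mathcal{T}$, so $h_q\leq h$, and $E_q$ coincides with the total number of parallel steps spent in full epochs over the whole execution, which by Lemma~\ref{lem:full} is $\BO{n/p}$.

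Next, I would partition all parallel steps executed up to the epoch in which $q$ is touched into three classes: steps of full epochs, steps of donating (non-full) epochs, and steps of preparing (non-full) epochs. The observation made right before Lemma~\ref{lem:don} guarantees that every non-full epoch occurring before $q$ is touched belongs to exactly one of the last two classes. Summing the bounds from the three lemmas yields a total of $\BO{n/p}+\BO{h\log p}+\BO{E_q+h\log p}=\BO{n/p+h\log p}$ steps up to and including the epoch in which $q$ is touched. Once $q$ has been touched, every tree node has been explored, and, as already remarked in the text preceding Lemma~\ref{lem:don}, the remaining work consists only of busy processors climbing back to the roots of their currently assigned subtrees and propagating termination, which costs $\BO{h+\log p}$ additional parallel steps. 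Adding this to the previous bound gives the claimed $\BO{n/p+h\log p}$ running time.

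The space claim is immediate from the description of the generic strategy and of its deterministic instantiation: each processor only stores the constant-size tuple $(r_i,v_i,d_i,t_i,q_i)$, together with the constant-size bookkeeping (tail length and pointer to the current middle node) needed to resume an interrupted Case~2 slow donation across epochs, and the prefix-like pairing computation can be implemented with constant auxiliary space per processor. I do not anticipate any serious obstacle: the proof is essentially the arithmetic of combining the lemmas with the right choice of $q$. The only point deserving a brief justification is that, at the moment the last-touched node $q$ is reached, no further traversal is needed and only the $\BO{h+\log p}$ cleanup remains, which is exactly what the choice of $q$ together with the remark preceding Lemma~\ref{lem:don} provides.
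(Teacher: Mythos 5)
Your proposal is correct and follows essentially the same route as the paper: instantiate Lemmas~\ref{lem:full}, \ref{lem:don}, and \ref{lem:prep} with $q$ the last-touched node (so $h_q \leq h$ and $E_q = \BO{n/p}$), sum the three contributions, and add the $\BO{h+\log p}$ cleanup for the processor holding $q$ to climb back to its root and for termination to be detected. The space argument is likewise the same observation that each processor's state is a constant number of words.
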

\begin{proof}
By Lemma~\ref{lem:full}, there are at most $\BO{n/p}$ steps in full epochs. Then, by Lemma~\ref{lem:don} and Lemma~\ref{lem:prep} (with $E_q = \BO{n/p}$), we have that all nodes in $\mathcal{T}$ are touched after
$\BO{n/p+h \log p}$ steps. 
Let $q$ be the last touched leaf.
After $\BO{h}$ steps the processor $\Proc_i$ that have touched $q$ reaches the root $r_i$
and becomes idle since the respective subtree has been completely explored, and after
$\BO{\log p}$ steps all processors recognize the entire tree $\T$ have been explored and
the algorithm ends. Since each processor stores a constant number of words and nodes, the
theorem follows.
\end{proof}

\subsection*{Randomized Algorithm}\label{sec:rand}
In the randomized algorithm, the durations of the traversal and of the pairing phase are
set to a constant (i.e.,  $\D{d},\D{p}=\BO{1}$), and the duration of a donation
phase is set to $\D{t}=\D{d}/\kappa$, for a suitable constant $\kappa$. While the
traversal phase and the donation phase
are as described in section \sel{Generic Strategy} and are the same as in the deterministic
algorithm, the pairing phase is implemented differently as follows. In a first step, each
idle processor sends a pairing request to a random processor; in a second step, a busy
processor $\Proc_i$ that has received a pairing request from (idle) processor $\Proc_j$,
sends a message to $\Proc_j$ to establish the pairing. Note that the communication
model described in \sel{Introduction (Our Contribution)} guarantees that each busy processor receives
at most one pairing request in the first step. 
The analysis of the randomized algorithm combines elements of the analysis of the above
deterministic algorithm and the one for  the randomized backtrack search algorithm in \cite{KZ93}. 

\begin{theorem}
The randomized algorithm completes in $\BO{{n}/{p}+h}$ parallel steps with probability at
least $1 - n^{-c}$ for any constant $c > 0$.
\end{theorem}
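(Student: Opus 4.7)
The plan is to mirror the analysis of Theorem~\ref{th:dbt} while replacing the $\log p$ factors (which arose from $\D{p}=\D{d}=\BT{\log p}$) by constants, since now $\D{t},\D{p},\D{d}=\BO{1}$, and to add a Chernoff-style argument to absorb the extra epochs in which the random pairing happens to fail.

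First, I would observe that full epochs still contribute $\BO{n/p}$ steps in total by the same counting as Lemma~\ref{lem:full}: each full epoch has $\BT{p}$ busy processors, each touching $\BT{1}$ nodes, so there are $\BO{n/p}$ full epochs, each lasting $\BO{1}$ steps. Then, fixing an arbitrary node $q$ and its (time-varying) special processor $S_q$, I would classify each non-full epoch before $q$ is touched as \emph{donating}, \emph{preparing}, or \emph{idle}, where idle means that $S_q$ is available for a new pairing but none is established (either because no idle processor picked $S_q$, or because two or more did and the OCPC-style model drops every message). The arguments of Lemmas~\ref{lem:don} and~\ref{lem:prep} go through with $\log p$ replaced by $1$: donating epochs are $\BO{h_q}$ and preparing epochs are $\BO{E_q+h_q}$, where $E_q=\BO{n/p}$.

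The new ingredient is the probabilistic analysis of idle epochs. In a non-full epoch there are $k\in[p/2,p-1]$ idle processors, each sending a request to a uniformly random target independently; consequently the probability that $S_q$ receives exactly one request, and hence is successfully paired, equals $(k/p)(1-1/p)^{k-1}$, which is at least some constant $c_0>0$ uniformly in $k$. Since fresh randomness is drawn each pairing phase, the number of epochs in which $S_q$ is free to accept a pairing, needed before $N$ successful pairings accumulate, is stochastically dominated by a negative binomial variable with success probability $c_0$ and target $N$, so a standard Chernoff tail bound gives $\BO{N+\log n}$ free epochs with probability $1-n^{-(c+1)}$. Plugging in $N=\BO{h_q}$ (the number of donations $S_q$ performs before $q$ is touched) and using the fact that any binary tree on $n$ nodes has height at least $\log_2 n-\BO{1}$, so that $\log n=\BO{h}$, one obtains $\BO{h}$ idle epochs with high probability.

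A mild subtlety is that the tail whose creation opens the first era can be longer than in the deterministic case, since $S_q$ may remain unpaired for several epochs while continuing to explore; however, such additional nodes contribute at most $\BO{h+\log n}=\BO{h}$ to the initial tail length (via the Chernoff bound on idle epochs), so the preparing-epoch bound inflates only by an additive $\BO{h}$ term and remains $\BO{n/p+h}$. Summing the three contributions and observing that, once the last leaf is touched, the algorithm terminates in $\BO{h}$ further steps, I obtain a total running time of $\BO{n/p+h}$ with probability at least $1-n^{-(c+1)}$ for the chosen $q$; a union bound over the $n$ possible choices for the last-touched leaf yields the claimed $1-n^{-c}$ probability. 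I expect the main technical work to lie in formalizing the Chernoff step cleanly: the indicator of a successful pairing in any epoch depends on the history-dependent state of $S_q$ (free vs.\ mid-donation), so one should condition on the natural filtration generated by the epoch history and apply the tail bound only to the subsequence of free epochs, exploiting the fact that the random pairings within that subsequence are independent of the past.
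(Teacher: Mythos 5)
Your proposal is correct and follows essentially the same route as the paper's proof: the same reuse of Lemmas~\ref{lem:full}--\ref{lem:prep} with constant-length phases, the same classification of non-full epochs into donating, preparing, and waiting (your ``idle'') ones, the same constant lower bound on the probability that the special processor is contacted by exactly one free idle processor, and a Chernoff-plus-union-bound finish relying on $h \geq \log n$. The only cosmetic difference is your bookkeeping of the tail bound (a negative-binomial bound on the free epochs needed for $\BO{h_q}$ donations, versus the paper's binomial bound $\B(d(n/p+h),16d(n/p+h),1/8)$ on the total number of waiting epochs), which changes nothing substantive.
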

\begin{proof}
The analysis of the randomized algorithm combines elements of the analysis of the
deterministic algorithm, presented in section \sel{Deterministic Algorithm},  and the one by Karp and
Zhang for their randomized backtrack search algorithm \cite{KZ93}. Differently from the
deterministic algorithm, we define an epoch \emph{full} if there are at least $p/4$ busy
processors at the end of the traversal phase, and \emph{non-full} otherwise. Reasoning as
in Lemma~\ref{lem:full}, we have that the number of steps in full epochs is
$\BO{{n}/{p}}$. Consider now non-full epochs and an arbitrary leaf $q$ of $\T$, and
classify such epochs as donating or preparing with respect to the special (busy) processor
$S$ for $q$. Note that due to the randomized pairing,  a non-full epoch can be both
non-donating and non-preparing with respect to $S$, since we are not guaranteed that in
non-full epochs processor $S$ is contacted by an idle processor. We call such
(non-donating and non-preparing) non-full epochs \emph{waiting} with respect to $S$.
Reasoning as in Lemma~\ref{lem:don}, we have that after $\BO{h}$ donating epochs  $q$ is
touched, hence the number of steps in donating epochs is $\BO{h}$ (recall that each epoch
comprises a constant number of steps). Also, using  the same argument of
Lemma~\ref{lem:prep}, it can be proved that the total number of steps in preparing epochs is
$\BO{n/p+ h}$. 

Finally, we upper bound the number of steps in waiting epochs by  showing that for any
leaf $q$, the number of waiting epochs before $q$ is touched is greater than $15d(n/p+h)$,
for a suitable constant $d\geq 1$, with probability at most $e^{-n/(4 p)}$. 
Consider a non-full epoch where the special processor can initiate a donation since
it does not need to complete a previously started donation. Since in a non-full epoch the number of
busy processors is $<p/4$, the number  $\hat{p}$ of idle processors that are not waiting for the
completion of donations started in earlier epochs is at least $p/2$. Therefore the probability that
the
special processor is paired to exactly one idle processor and the epoch is donating or preparing is
${\hat{p} \choose 1} (1/p)(1-1/p)^{\hat{p}-1} \ge 1/8$.
Consider now a non-full epoch where the special processor resumes a previously
interrupted donation. In this case the probability of being donating or preparing is one.
Thus, the probability that a non-full epoch is donating or preparing is at least $1/8$, while the
probability an epoch is waiting is at most $7/8$.  

Let $\B(k,N,\rho)$ denote the probability that there are less than $k$ successes in $N$
independent Bernoulli trials, where each trial has probability $\rho$ of success. As shown
above, after at most $d(n/p+h)$ donating and preparing epochs, for a suitable constant
$d\geq 1$, leaf $q$ is touched. Thus, the probability of having more than $15 d(n/p+h)$
waiting epochs is bounded above by $\B(d(n/p+h), 16d(n/p+h), 1/8)$.
By a Chernoff bound~\cite{MU05} we have that $\B(d(n/p+h),
16d(n/p+h), 1/8) \le e^{-d(n/p+h)/4} \le e^{-(d \log n)/4 }$ since $h \geq \log n$.
Then, since a waiting epoch lasts $\BO{1}$ steps, $q$ is touched after
$\BO{n/p+h}$ steps. By the union bound, the probability that each leaf is touched in
$\BO{n/p+h}$ steps is $\le n e^{-(d \log n)/4 }\leq n^{-c}$ by setting $d$ larger than $4(c+1)/\log e$. The theorem follows.

\end{proof}

\section*{Space-Efficient Branch-and-Bound}\label{sec:bb}
In this section we present a Las Vegas algorithm for the
branch-and-bound problem, which requires to explore a heap-ordered
binary tree $\T$, starting from the root, to find the minimum-cost leaf.
For simplicity, we assume that all node costs are distinct.  The
algorithm implements, in a parallel setting, an adaptation of the
sequential space-efficient strategy proposed in \cite{KSW86}, which
reduces the branch-and-bound problem to the problem of finding the
node with the $n$-th smallest cost, for exponentially increasing
values of $n$. In what follows, we first present an algorithm for a
generalized selection problem, which uses the deterministic backtrack
algorithm from the previous section as a subroutine.  The
generalization aims at controlling also the height of the explored subtrees
which, in some cases, may dominate the parallel complexity. Then, we
show how to reduce branch-and-bound to the generalized selection
problem.

\subsubsection*{Generalized Selection}\label{sec:selection}
Let $\T$ be an infinite binary tree whose nodes are associated with
distinct costs satisfying the min-heap order property, and let $c(u)$
denote the cost associated with a node $u$.  We use $\T_c$ to denote
the subtree of $\T$ containing all nodes of cost less than or equal to a
value $c$.  Given two nonnegative integers $n$ and $h$, let $c(n,h)$
be the largest cost of a node in $\T$ such that $\T_{c(n,h)}$ has at
most $n$ nodes and height at most $h$. It is important to note that
the maximality of $c(n,h)$ implies that the subtree $\T_{c(n,h)}$
must have exactly $n$ nodes or exactly height $h$ (or both).

We define the following generalized \emph{selection problem}: given
nonnegative integers $n, h$ and the root $r$ of $\T$, find the cost
$c(n,h)$.  We say that a node $u \in \T$ is \emph{good} (w.r.t. $n$
and $h$) if $c(u) \leq c(n,h)$.  Suppose we want to determine whether
a node $u$ is good. We explore $\T_{c(u)}$ using the deterministic
backtrack algorithm and keeping track, at the end of each epoch, of
the number of nodes and the height of the subtree explored until that
time.  The visit finishes as soon as the first of the following three
events occurs: (1) subtree $\T_{c(u)}$ is completely visited; or (2)
the explored subtree has more than $n$ nodes; or (3) the height of
the explored subtree is larger than $h$. Node $u$ is flagged good only when the
first event occurs. We have:
\begin{lemma}\label{lem:1}
Determining whether a node $u$ is good can be accomplished in time
$\BO{n/p+h\log p}$ using constant space per processor.
\end{lemma}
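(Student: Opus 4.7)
The plan is to invoke the deterministic backtrack search algorithm of Theorem~\ref{th:dbt} on $\T_{c(u)}$, augmented with a lightweight end-of-epoch termination test, and to show that each of the three stopping cases fits within the claimed time and space bounds. Children are generated on the fly: each processor keeps the threshold $c(u)$ (broadcast at start-up in $\BO{\log p}$ steps) in constant space and treats an edge to a child $w$ as absent whenever $c(w) > c(u)$, so the effective tree explored is exactly $\T_{c(u)}$. Each processor also maintains two scalars updated trivially during traversal: the number of nodes it has touched and the maximum depth it has reached. At the end of each epoch we perform a prefix-sum and a prefix-max over the $p$ processors, in $\BO{\log p}$ parallel steps each, which fit within an epoch's $\BT{\log p}$ time budget. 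We terminate as soon as (1) the overall exploration completes, (2) the global node count exceeds $n$, or (3) the global maximum depth exceeds $h$, and we flag $u$ good only in case (1).

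Correctness is immediate. If $u$ is good, then $\T_{c(u)} \subseteq \T_{c(n,h)}$ and therefore has at most $n$ nodes and height at most $h$; thus events (2) and (3) never trigger before the full visit completes, and $u$ is correctly declared good. If $u$ is not good then $\T_{c(u)}$ strictly contains $\T_{c(n,h)}$, so by the maximality of $c(n,h)$ either the size of $\T_{c(u)}$ exceeds $n$ or its height exceeds $h$, forcing (2) or (3) to fire and $u$ to be declared not good.

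For the running-time bound, consider the epoch immediately preceding termination: by the stopping test, at its end the explored tree has at most $n$ nodes and height at most $h$. The proof of Theorem~\ref{th:dbt} decomposes the parallel steps into full, donating, and preparing epochs, and Lemmas~\ref{lem:full}, \ref{lem:don}, \ref{lem:prep} bound each class in terms of quantities (``nodes touched'' and ``depth reached'') that are monotone during the execution. Applying these bounds at the penultimate-epoch snapshot with $n$ and $h$ as upper bounds gives $\BO{n/p+h\log p}$ steps up to that point; the final epoch together with the stopping reduction contributes $\BO{\log p}$ additional steps, which is absorbed. Per-processor space stays $\BO{1}$ since the augmentations (threshold, two counters, current depth) add only a constant number of words to the backtrack state. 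The main subtlety is transferring Theorem~\ref{th:dbt}'s bound, which is stated for a completed visit, to an intermediate stopping time; this is handled precisely because Lemmas~\ref{lem:full}--\ref{lem:prep} are phrased in terms of progress quantities that are monotone during the execution, and can therefore be applied at the snapshot at hand with the threshold values $n$ and $h$ as upper bounds.
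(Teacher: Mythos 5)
Your proposal is correct and follows essentially the same route as the paper: run the deterministic backtrack algorithm on $\T_{c(u)}$ while tracking the node count and height with end-of-epoch $\BO{\log p}$ aggregations, stop on one of the three events, and invoke Theorem~\ref{th:dbt} using the fact that the explored subtree never exceeds $n$ nodes or height $h$ before termination. Your additional care about the penultimate-epoch snapshot and the maximality argument for the ``not good'' case only makes explicit what the paper leaves implicit.
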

\begin{proof}
Note that keeping track of the number of nodes and the height of the
explored subtree requires minor modifications to the backtrack
algorithm and contributes an $\BO{\log p}$ additive factor to the
running time of each epoch, which is negligible. The lemma follows
immediately by applying Theorem~\ref{th:dbt} and observing that the
subtree explored to determine whether $u$ is good has at most $n$
nodes and height at most $h$.
\end{proof}

Consider a subtree $\T'$ of $\T$ with $n$ nodes and height $h$, and
suppose that some nodes of $\T'$ are marked as \emph{distinguished}.
Our selection algorithm makes use of a subroutine to efficiently pick
a node uniformly at random among the \emph{distinguished} ones of
$\T'$. To this purpose, we use reservoir sampling \cite{V85}, which
allows to sample an element uniformly at random from a data stream of
unknown size in constant space. Specifically, $\T'$ is explored using
backtrack search. During the exploration, each processor counts the
number of distinguished nodes it touches for the first time, and picks
one of them uniformly at random through reservoir sampling. The final
random node is obtained from the $p$ selected ones in $\log p$ rounds,
by discarding half of the nodes at each round, as follows. For $0\leq
k < p$, let $q^0_k$ be the number of nodes counted by processor $P_k$
in the backtrack search. In the $i$-th round, processor $P_{2^i j}$,
with $0\leq i<\log p$ and $0\leq j < p/2^{i}$, replaces its selected
node with the node selected by $P_{2^i (j+1)-1}$ with probability
$q^i_{2^i (j+1)}/(q^i_{2^i j}+q^i_{2^i(j+1)})$, and sets $q^{i+1}_{2^i
  j}$ to $q^i_{2^i j}+q^i_{2^i(j+1)}$.  After the last round, the
distinguished node held by $P_0$ is returned. We have:
\begin{lemma}\label{lem:2}
Selecting a node uniformly at random from a set of distinguished nodes
in a subtree $\T'$ of $\T$ with $n$ nodes and height $h$ can be
accomplished in time $\BO{n/p+h\log p}$, with high probability, using
constant space per processor.
\end{lemma}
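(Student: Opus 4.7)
My plan is to analyze the two phases of the algorithm separately. First I would invoke Theorem~\ref{th:dbt} to run the deterministic backtrack procedure on $\T'$, which visits every node of $\T'$ in $\BO{n/p + h\log p}$ parallel steps using constant space per processor. During this exploration, each processor $P_k$ maintains in $\BO{1}$ additional space a counter of the distinguished nodes it has touched for the first time and a reservoir sample $s_k$ updated by the standard rule: when the $m$-th such node $v$ is encountered, replace $s_k$ with $v$ with probability $1/m$. A standard induction on $m$ then gives that, conditional on the set of distinguished nodes touched by $P_k$, the final value of $s_k$ is uniformly distributed over that set.

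The heart of the proof is the correctness of the $\log p$-round combining phase, which I would establish by induction on the round index $i$. The invariant I would prove is that, after round $i$, each surviving processor holds the total count and a uniformly random sample of the union of the distinguished sets of the $2^{i+1}$ processors it represents. The inductive step reduces to the elementary fact that if $X$ is uniform on a set $A$ of size $a$ and $Y$ is an independent uniform sample on a disjoint set $B$ of size $b$, then the random variable equal to $Y$ with probability $b/(a+b)$ and to $X$ otherwise is uniform on $A \cup B$; this is precisely the replacement rule applied to the two halves being merged, with the two counters providing the correct weights. After $\log p$ rounds, $P_0$ holds a uniformly random distinguished node of $\T'$.

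For the complexity, the combining phase performs $\log p$ rounds whose senders and receivers are fixed by index arithmetic, so no two messages collide at any receiver and each round costs $\BO{1}$ steps in the assumed communication model; the combining thus adds $\BO{\log p}$ steps, which are absorbed into the $\BO{n/p + h\log p}$ cost of the backtrack. Each processor stores only its reservoir sample, a constant-size counter, and the backtrack bookkeeping, all in $\BO{1}$ words. The only delicate point I anticipate is verifying that the replacement probabilities are chosen so that uniformity is preserved at every merge; once the weighted-merge fact above is in place, the rest is routine bookkeeping. I remark in passing that the running-time bound obtained here is in fact deterministic, so the ``with high probability'' qualifier in the statement is inherited from the broader randomized contexts in which the lemma will be applied rather than arising from any concentration argument within its own analysis.
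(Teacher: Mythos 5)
Your proposal is correct and follows essentially the same route as the paper: local reservoir sampling during a deterministic backtrack exploration of $\T'$, followed by an induction over the $\log p$ combining rounds showing that the weighted replacement rule preserves uniformity over the union of the merged sets, with the time bound absorbed into the $\BO{n/p+h\log p}$ cost of the backtrack search. Your side remark that the bound is actually deterministic is a fair observation and does not conflict with the paper's argument.
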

\begin{proof}
We prove that at the beginning of round $i$, processor $P_{2^i j}$
contains a node selected uniformly at random from the distinguished
nodes visited by $P_{2^i j}, P_{2^i j+1}\ldots P_{2^i (j+1)-1}$ with
probability $1/q^i_{2^i j}=1/\sum_{k=2^i j}^{2^{i} (j+1)-1}q^0_k$, for
each $0\leq j < p/2^{i}$ and $0\leq i <\log p$.

The proof is by induction on $i$. At the beginning of round $0$, each processor $P_j$, with $0\leq
j<p$, contains a distinguished node sampled with probability $1/q^0_j$ by the property of
reservoir sampling. 

Suppose the claim is verified at the beginning of the round $i$, with $0\leq i<
\log p$. Then for each $0\leq j < p/2^{i}$, processor $P_{2^{i} j}$ contains a
node selected uniformly at random from the distinguished nodes touched for the first time by
$P_{2^{i} j},
P_{2^{i} j+1}, \ldots P_{2^{i} (j+1)-1}$, with
probability $1/\sum_{k=2^{i} j}^{2^{i} (j+1)-1}q^0_k$. 
The probability that $P_{2^{i} j}$ does not replace its node in round $i$ is
${q^i_{2^{i} j}}/{(q^i_{2^{i} j}+q^i_{2^{i} (j+1)})}$. Therefore, the probability a
distinguished node touched by $P_{2^{i} j}, P_{2^{i} j+1}, \ldots P_{2^{i}
(j+1)-1}$ is in $P_{2^i j}$ at end of the $i$-th round (i.e., at
the beginning of the $(i+1)$-st round) is
$$
\frac{q^i_{2^{i} j}}{q^i_{2^{i} j}+q^i_{2^{i} (j+1)}}\frac{1}{q^i_{2^{i} j}} =
\frac{1}{\sum_{k=2^i j}^{2^{i} (j+1)-1}q^0_k}=\frac{1}{q^{i+1}_{2^{i} j}}.
$$
A similar argument applies in the case  $P_{2^{i} j}$ replaces its node.
\end{proof}

We are now ready to describe the parallel algorithm for
the generalized selection problem introduced before.
The algorithm works in epochs. In the $i$-th epoch, it starts with a
lower bound $L_i$ to $c(n,h)$ (initially $L_1=-\infty$) and ends with
a new lower bound $L_{i+1}>L_i$ computed by exploring the set $F_i$
consisting of the children in $\T$ of the leaves of $\T_{L_i}$. More
in details, $L_{i+1}$ is set to the largest cost of a good node in
$F_i$ (note that if $L_i < c(n,h)$ there exists at least one good
node). The algorithm terminates as soon as $L_i = c(n,h)$.  The
largest good node in $F_i$ is computed by a binary search using random
splitters as suggested in \cite{KSW86}. The algorithm iteratively
updates two values $X^i_L$ and $X^i_U$, which represent lower and
upper bounds on the largest cost of a good node in $F_i$, until
$X^i_L=X^i_U$. Initially, we set $X^i_L=L_i$ and $X^i_U=+\infty$. The
two values are updated as follows: by using the strategy analyzed in
Lemma~\ref{lem:2}, the algorithm selects a node $u$, called
\emph{random splitter}, uniformly at random among those in $F_i$ with
cost in the range $[X^i_L, X^i_U]$ (which are the distinguished
nodes). Then, by using the strategy analyzed in Lemma~\ref{lem:1}, the
algorithm verifies if $u$ is good: if this is the case, then $X^i_L$
is set to $c(u)$, otherwise $X^i_U$ is set to $c(u)$.
\begin{theorem}\label{th:selection}
Given two nonnegative integers $n$ and $h$, the cost $c(n,h)$ in a
heap-ordered binary tree $\T$ can be determined in time $O((n /
p+h\log p)h \log n)$, with high probability, and constant space per
processor.
\end{theorem}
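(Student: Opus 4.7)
The plan is to express the total running time as the product of three factors: the number of epochs, the number of random-splitter iterations per epoch, and the cost of one iteration, which I expect to be $\BO{h}$, $\BO{\log n}$ with high probability, and $\BO{n/p + h\log p}$ with high probability (the latter inherited from Lemmas~\ref{lem:1} and~\ref{lem:2}). Their product is exactly the target $\BO{(n/p + h\log p)h\log n}$. Since the only per-processor state carried across iterations consists of $L_i$, $X^i_L$, $X^i_U$, plus the $\BO{1}$ scratch used by the subroutines, the constant-space guarantee is inherited immediately.

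For the epoch count I would prove by induction on $i$ that after the $i$-th epoch every good node of depth at most $i-1$ lies in $\T_{L_{i+1}}$. The base case is immediate because the root is good and is the only element of $F_1$, so $L_2$ equals the cost of the root. For the inductive step, a good node $v$ at depth $i$ has a good parent at depth $i-1$ which sits in $\T_{L_{i+1}}$ by hypothesis, so either $v$ itself lies in $\T_{L_{i+1}}$, or $v \in F_{i+1}$, in which case $c(v) \le L_{i+2}$ by the definition of $L_{i+2}$ as the largest good cost in $F_{i+1}$. Because $\T_{c(n,h)}$ has height at most $h$, every good node sits at depth at most $h$, so after $\BO{h}$ epochs $\T_{L_i}$ contains $\T_{c(n,h)}$; combined with $L_i \le c(n,h)$, this forces $L_i = c(n,h)$ and the algorithm terminates.

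For the per-epoch cost, I would first note that $|F_i| \le 2|\T_{L_i}| \le 2n$, since $\T_{L_i} \subseteq \T_{c(n,h)}$ has at most $n$ nodes and each contributes at most two children to the frontier. The random-splitter binary search on $F_i \cap [X^i_L, X^i_U]$ is a randomized quickselect-style process whose surviving range shrinks by a constant factor per round in expectation; a Chernoff-style argument on the number of pivots landing in the middle half of the current range yields $\BO{\log n}$ rounds with probability at least $1 - n^{-c}$. Each round invokes Lemma~\ref{lem:2} to draw a uniform distinguished splitter from an exploration of $\T_{L_i}$ together with its frontier (which has $\BO{n}$ nodes and height $\BO{h}$), followed by a goodness test via Lemma~\ref{lem:1}; both subroutines run in $\BO{n/p + h\log p}$ time with high probability. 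Multiplying the three factors and taking a union bound over the $\BO{h\log n}$ randomized sub-steps yields the claimed bound. The step I expect to be most delicate is converting the random-splitter analysis from an expected $\BO{\log n}$ bound to a high-probability one that survives the union bound across all epochs and all leaves; I would handle this by the standard Chernoff calculation on the Binomial counting of range-halving pivots, analogous to the classical analysis of randomized quickselect.
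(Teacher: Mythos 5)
Your proposal is correct and follows essentially the same route as the paper's proof: the same decomposition into $\BO{h}$ epochs (via the depth/termination argument), $\BO{\log n}$ random-splitter iterations per epoch established by a Chernoff bound on the fraction of pivots that land in the middle half of the surviving range, and a per-iteration cost of $\BO{n/p+h\log p}$ inherited from Lemmas~\ref{lem:1} and~\ref{lem:2}, finished off with a union bound over all invocations of the binary search. The only differences are cosmetic (your induction on the depth of good nodes versus the paper's path-to-a-leaf phrasing, and a slightly different but equally valid counting of events in the final union bound).
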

\begin{proof}
By Lemma~\ref{lem:1} and Lemma~\ref{lem:2}, each iteration of the
binary search algorithm requires $\BO{n/p+h\log p}$ time. Assume that
with high probability, the number of iterations of any execution of
the binary search algorithm (with random splitters) is bounded by
$K$. In this case an epoch ends in $\BO{(n/p+h\log p)K}$ time with
high probability. Consider an arbitrary leaf $q$ of $\T_{c(n,h)}$.
Clearly,  the depth of $q$ in $\T$ is at most $h$. It is easy
to see that for every $i \geq 0$, the nodes of $\T_{L_{i+1}}$ include
all those of depth $i$ or less belonging to the path from the root of
$\T$ to $q$. Therefore, after at most $h$ epochs, all leaves of
$\T_{c(n,h)}$ will be included in some $\T_{L_{i+1}}$ and the algorithm
terminates. Thus, the total time for the select algorithm is
$\BO{(n/p+h\log p)h K}$ with high probability. In what follows we
derive a bound on $K$ that holds with high probability for any
execution of the binary search algorithm.

Consider an arbitrary epoch $i$ and denote with $K$ the number of
iterations for computing $L_{i+1}$ starting from $L_i$, that is the
number of iterations to satisfy $X^i_L=X^i_U$ in the binary search
algorithm with random splitters.  For iteration $j$, with $ 0 \le j
\le K-1$, of the binary search algorithm, let $n_{i,j}$ be the number
of nodes in $F_i$ whose cost is in the range $[X^i_L,X^i_U]$. Let
$Y_{i,j}$ be a Bernoulli random variable, with $Y_{i,j}=0$ if the
random splitter $u$ in the $j$-th iteration is such that
$(1/4)n_{i,j}\leq n_{i,j+1}\leq (3/4) n_{i,j}$, and $Y_{i,j}=1$
otherwise. Note that if $Y_{i,j}=0$, $u$ partitions the $n_{i,j}$
nodes with cost in $[X^i_L, X^i_U]$ into two sets, each of cardinality
at most $3/4 n_{i,j}$. Therefore at most $\log_{4/3} n$ of the $K$
variables $Y_{i,j}, 0 \le j \le K-1$ can have value 0, that is
$\sum_{j=0}^{K-1}(1-Y_{i,j})\leq \log_{4/3} n$. Moreover, there are
$n_{i,j}/2$ nodes with cost in $[X^i_L,X^i_U]$ that can partition the
set of $n_{i,j}$ nodes with cost in $[X^i_L,X^i_U]$ into two sets of
cardinality at most $3n_{i,j}/4$, therefore $Pr[Y_{i,j}=1]=1/2$ for $0
\le j \le K-1$, and thus, by using a Chernoff bound~\cite{MU05}, we
get $ Pr[\sum_{j=0}^{K-1} Y_{i,j} \geq (1+\epsilon) K/2 ]\leq n^{-K
\epsilon^2/6}, $ for any constant $\epsilon\in(0,1]$. That is,
$\sum_{j=0}^{K-1} Y_{i,j} < (1+\epsilon) K/2$ holds with probability
$1 - n^{-K \epsilon^2/6}$, while $\sum_{i=0}^{K-1}(1-Y_{i,j})\leq
\log_{4/3} n$ always holds; combining these two events we have that
$K\leq (2/(1-\epsilon^2)) \log_{4/3}n$ with probability $1 -
n^{-(1+\BO{1})}$. Since the number of times the binary search
algorithm is executed is $\BO{n}$, by the union bound we have that
with high probability, for any execution of the binary search
algorithm $K$ is $\BO{\log n}$.
\end{proof}

We observe that by using the randomized backtrack search algorithm, the
complexity of the selection algorithm can be slightly improved.

\subsubsection*{Branch-and-Bound}\label{sec:bbalg}
Our branch-and-bound algorithm consists of a number of iterations
where we run the selection algorithm from the previous section for
exponentially increasing values of $n$ and $h$ until the first leaf is
found.  More precisely, let $n_0 = 2, h_0=1$ and $c_0 = c(n_0,h_0)$.
(Note that $c_0$ is the cost of one of the children of the root
and can be determined in constant time.)  For $i
\geq 1$, in the $i$-th iteration the algorithm determines $c_i =
c(n_i,h_i)$ where $n_i = 2n_{i-1}$, if $\T_{c_{i-1}}$ has exaclty
$n_{i-1}$ nodes, and $h_i = 2h_{i-1}$, if $\T_{c_{i-1}}$ has exaclty
height $h_{i-1}$. The loop terminates at iteration $k$, where $k$ is
the first index such that $\T_{c_k}$ includes a leaf. At that moment,
we use backtrack search to return the min-cost leaf in $\T_{c_k}$.
The following corollary is easily
established.
\begin{corollary}
\sloppy
The branch-and-bound algorithm requires $\BO{(n / p+h\log p)h \log^2
  n}$ parallel steps, with high probability, and constant space per
processor.
\end{corollary}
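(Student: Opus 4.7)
The plan is to reduce the corollary to Theorem~\ref{th:selection} by showing that the outer loop executes only $O(\log n)$ iterations and that each iteration can be charged the worst-case per-call cost $O((n/p + h\log p)h \log n)$.

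First I would bound the number $k$ of outer iterations. By construction, each iteration at least doubles $n_i$ or $h_i$ (as noted right after the definition of $c(n,h)$, the subtree $\T_{c_{i-1}}$ must attain one of the two bounds). Moreover, $n_i$ can double only when $\T_{c_{i-1}}$ has exactly $n_{i-1}$ nodes, which forces $n_{i-1} \leq n$, and similarly $h_i$ can double only when $h_{i-1} \leq h$. Hence $n_k \leq 2n$ and $h_k \leq 2h$, and the total number of doublings satisfies $k \leq \log_2 n_k + \log_2 h_k = O(\log n)$. The termination condition also ensures $\T_{c_k}$ contains the min-cost leaf of $\T$, so the final backtrack indeed returns the correct answer.

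Next I would bound each iteration: invoking Theorem~\ref{th:selection} with parameters $(n_i,h_i)$ solves the selection subproblem in $O((n_i/p + h_i\log p) h_i \log n_i)$ parallel steps with high probability. Using $n_i = O(n)$, $h_i = O(h)$, and $\log n_i = O(\log n)$, each such call costs $O((n/p + h\log p)h \log n)$ w.h.p. The single final backtrack invocation on $\T_{c_k}$, which has $O(n)$ nodes and height $O(h)$, adds only $O(n/p + h\log p)$ by Theorem~\ref{th:dbt} and is absorbed in the above. Multiplying by $k = O(\log n)$ yields the claimed $O((n/p + h\log p)h \log^2 n)$ total running time.

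For the high-probability guarantee, I would take the constant hidden in Theorem~\ref{th:selection} large enough that each of the $O(\log n)$ selection invocations succeeds with probability at least $1 - n^{-c'}$ for some sufficiently large $c'$, so that a union bound over the $O(\log n)$ calls gives overall success probability $1 - n^{-c}$ for any prescribed constant $c > 0$. Constant space per processor is inherited from the subroutines (Theorem~\ref{th:dbt}, Lemmas~\ref{lem:1} and \ref{lem:2}), plus the $O(1)$ auxiliary state $(n_i, h_i, c_i)$ kept between iterations. The main (mild) obstacle is precisely the iteration-count argument: one must rule out that the doubling rule drives $n_i$ or $h_i$ far beyond $n$ or $h$, which is exactly where the characterization that $\T_{c(n,h)}$ achieves exactly $n$ nodes or exactly height $h$ is invoked.
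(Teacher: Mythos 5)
Your proposal is correct and follows essentially the same route as the paper's proof: bound the number of outer iterations by $O(\log n)$ via the doubling argument together with $n_i \leq 2n$ and $h_i \leq 2h$, then charge each iteration the cost from Theorem~\ref{th:selection}. The extra details you supply (the union bound over the $O(\log n)$ selection calls and the cost of the final backtrack pass) are left implicit in the paper but are consistent with it.
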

\begin{proof}
Consider the $i$-th iteration of the above algorithm. As observed
before, the subtree $\T_{c_{i-1}}$ must have exactly $n_{i-1}$ nodes
or exactly height $h_{i-1}$ (or both).  Hence, at least one of the two
parameters $n_i$ or $h_i$ is doubled with respect to the previous
iteration. Moreover, denoting by $n$ and $h$ the number of nodes and
height, respectively, of $\T_{c^*}$, where $c^*$ is the cost of the
minimum-cost leaf of $\T$, it is easy to show that $n_i \leq 2n$ and
$h_i \leq 2h$ for every $i$. Therefore the algorithm will execute
$\BO{\log n + \log h} =\BO{\log n}$ iterations of the selection
algorithm, and, by Theorem~\ref{th:selection}, each iteration requires
$\BO{(n / p+h\log p)h \log n}$ parallel steps. 
\end{proof}

\section*{Conclusions}\label{sec:conc}
We presented the first time-efficient combinatorial parallel search
strategies which work in constant space per processor. For backtrack
search, the time of our deterministic algorithm comes within a factor
$\BO{\log p}$ from optimal, while our randomized algorithm is
time-optimal. Building on backtrack search, we provided a randomized
algorithm for the more difficult branch-and-bound problem, which
requires constant space per processor and whose time is an $\BO{h
  \;\mbox{polylog}(n)}$ factor away from optimal.

While our results for backtrack search show that the nonconstant space
per processor required by previous algorithms is not necessary to
achieve optimal running time, our result for branch-and-bound still
leaves a gap open, and more work is needed to ascertain whether better
space-time tradeoffs can be established. However, the reduction in
space obtained by our branch-and-bound strategy could be crucial for
enabling the solution of large instances, where $n$ is  huge 
but $\BOM{n/p}$ space per processor cannot be tolerated.  
The study of space-time tradeoffs is crucial for novel
computational models such as MapReduce, suitable for cluster and cloud computing
\cite{PPRSU12}. However, algorithms for combinatorial
search strategies on such new models deserve further investigations.

As in \cite{KSW86}, our algorithms assume that the father of a tree
node can be accessed in constant time, but this feature may be hard to
implement in certain application contexts, especially for
branch-and-bound.  However, our algorithm can be adapted so to avoid
the use of this feature by increasing the space requirements of each
processor to $\BT{h}$.  We remark that even with this additional
overhead, the space required by our branch-and-bound algorithm is
still considerably smaller, for most parameter values, than that of
the state-of-the-art algorithm of \cite{KZ93}, where $\BT{n/p}$ space
per processor may be needed.

\section*{Acknowledgments}
This  work was supported, in part, by the University of Padova under Project CPDA121378, and by MIUR of Italy under project AMANDA. F. Vandin was also supported by NSF grant IIS-1247581.

\bibliographystyle{splncs} 

\bibliography{biblio}

\end{document}